\tikzstyle{vertex} = [circle, draw=black, scale=0.7]
\tikzstyle{edgelabel} = [circle, fill=white, scale=0.8]
\begin{document}
\title{An Integer Programming Approach to the Student-Project Allocation Problem with Preferences over Projects}

\author{
David Manlove\thanks{Supported by grant EP/P028306/1 from the Engineering and Physical Sciences Research Council. Orcid ID: 0000-0001-6754-7308.}
\and
Duncan Milne
\and Sofiat Olaosebikan\thanks{Supported by a College of Science and Engineering Scholarship, University of Glasgow. Orcid ID: 0000-0002-8003-7887.}
}

\institute{School of Computing Science, University of Glasgow, \\e-mail: \texttt{David.Manlove@glasgow.ac.uk, Duncan.Milne1@gmail.com, s.olaosebikan.1@research.gla.ac.uk}
}
\maketitle

\begin{abstract}
The Student-Project Allocation problem with preferences over Projects ({\scriptsize SPA-P}) involves sets of students, projects and lecturers, where the students and lecturers each have preferences over the projects. In this context, we typically seek a stable matching of students to projects (and lecturers). However, these stable matchings can have different sizes, and the problem of finding a maximum stable matching ({\scriptsize MAX-SPA-P}) is NP-hard. There are two known approximation algorithms for {\scriptsize MAX-SPA-P}, with performance guarantees of $2$ and $\frac{3}{2}$. In this paper, we describe an Integer Programming (IP) model to enable {\scriptsize MAX-SPA-P} to be solved optimally. Following this, we present results arising from an empirical analysis that investigates how the solution produced by the approximation algorithms compares to the optimal solution obtained from the IP model, with respect to the size of the stable matchings constructed, on instances that are both randomly-generated and derived from real datasets. Our main finding is that the $\frac{3}{2}$-approximation algorithm finds stable matchings that are very close to having maximum cardinality.
\keywords{Integer Programming model \and Student-Project Allocation problem \and Blocking pair \and Stable matching \and Maximum cardinality matching \and Empirical analysis}
\end{abstract}

\thispagestyle{empty}
\setcounter{page}{1}
\pagestyle{headings}
\section{Introduction}
Matching problems, which generally involve the assignment of a set of agents to another set of agents based on preferences, have wide applications in many real-world settings. One such application can be seen in an educational context, e.g., the allocation of pupils to schools, school-leavers to universities and students to projects. In the context of allocating students to projects, university lecturers propose a range of projects, and each student is required to provide a preference over the available projects that she finds acceptable. Lecturers may also have preferences over the students that find their project acceptable and/or the projects that they offer. There may also be upper bounds on the number of students that can be assigned to a particular project, and the number of students that a given lecturer is willing to supervise. The problem then is to allocate students to projects based on these preferences and capacity constraints -- the so-called \textit{Student-Project Allocation problem} ({\scriptsize SPA}) \cite{Man13,RGSA17}. 

Two major models of {\scriptsize SPA} exist in the literature: one permits preferences only from the students \cite{AB03,HSVS05,KIMS15}, while the other permits preferences from the students and lecturers \cite{AIM07,Kaz02}. Given the large number of students that are typically involved in such an allocation process, many university departments seek to automate the allocation of students to projects. Examples include the School of Computing Science, University of Glasgow \cite{KIMS15}, the Faculty of Science, University of Southern Denmark \cite{CFG17}, the Department of Computing Science, University of York \cite{Kaz02}, and elsewhere \cite{AB03,RGSA17,HSVS05}. 

In general, we seek a \textit{matching}, which is a set of agent pairs who find one another acceptable that satisfies the capacities of the agents involved. For matching problems where preferences exist from the two sets of agents involved (e.g., junior doctors and hospitals in the classical \textit{Hospitals-Residents problem} ({\scriptsize HR}) \cite{GS62}, or students and lecturers in the context of {\scriptsize SPA}), it has been argued that the desired property for a matching one should seek is that of \textit{stability} \cite{Rot84}. Informally, a \textit{stable matching} ensures that no acceptable pair of agents who are not matched together would rather be assigned to each other than remain with their current assignees.

Abraham, Irving and Manlove \cite{AIM07} proposed two linear-time algorithms to find a stable matching in a variant of {\scriptsize SPA} where students have preferences over projects, whilst lecturers have preferences over students. The stable matching produced by the first algorithm is student-optimal (that is, students have the best possible projects that they could obtain in any stable matching) while the one produced by the second algorithm is lecturer-optimal (that is, lecturers have the best possible students that they could obtain in any stable matching).
 
Manlove and O'Malley \cite{MO08} proposed another variant of {\scriptsize SPA} where both students and lecturers have preferences over projects, referred to as {\scriptsize SPA-P}. In their paper, they formulated an appropriate stability definition for {\scriptsize SPA-P}, and they showed that stable matchings in this context can have different sizes. Moreover, in addition to stability, a very important requirement in practice is to match as many students to projects as possible. Consequently, Manlove and O'Malley \cite{MO08} proved that the problem of finding a maximum cardinality stable matching, denoted {\scriptsize MAX-SPA-P}, is NP-hard. Further, they gave a polynomial-time $2$-approximation algorithm for {\scriptsize MAX-SPA-P}. Subsequently, Iwama, Miyazaki and Yanagisawa \cite{IMY12} described an improved approximation algorithm with an upper bound of $\frac{3}{2}$, which builds on the one described in \cite{MO08}. In addition, Iwama \textit{et al.}~\cite{IMY12} showed that {\scriptsize MAX-SPA-P} is not approximable within $\frac{21}{19} - \epsilon$, for any $\epsilon > 0$, unless P = NP. For the upper bound, they modified Manlove and O'Malley's algorithm \cite{MO08} using Kir\'{a}ly's idea \cite{Kir11} for the approximation algorithm to find a maximum stable matching in a variant of the \textit{Stable Marriage problem}.

Considering the fact that the existing algorithms for {\scriptsize MAX-SPA-P} are only guaranteed to produce an approximate solution, we seek another technique to enable {\scriptsize MAX-SPA-P} to be solved optimally. Integer Programming (IP) is a powerful technique for producing optimal solutions to a range of NP-hard optimisation problems, with the aid of commercial optimisation solvers, e.g., Gurobi \cite{ZZZ20}, GLPK \cite{ZZZ21} and CPLEX \cite{ZZZ22}. These solvers can allow IP models to be solved in a reasonable amount of time, even with respect to problem instances that occur in practical applications.


\paragraph{\textbf{Our Contribution.}} In Sect.~\ref{sect:ip-model}, we describe an IP model to enable {\scriptsize MAX-SPA-P} to be solved optimally, and present a correctness result. In Sect.~\ref{sect:empirical-analysis}, we present results arising from an empirical analysis that investigates how the solution produced by the approximation algorithms compares to the optimal solution obtained from our IP model, with respect to the size of the stable matchings constructed, on instances that are both randomly-generated and derived from real datasets. These real datasets are based on actual student preference data and manufactured lecturer preference data from previous runs of student-project allocation processes at the School of Computing Science, University of Glasgow. We also present results showing the time taken by the IP model to solve the problem instances optimally. Our main finding is that the $\frac{3}{2}$-approximation algorithm finds stable matchings that are very close to having maximum cardinality. The next section gives a formal definition for {\scriptsize SPA-P}.

\section{Definitions and Preliminaries}
\label{sect:definitions}
We give a formal definition for {\scriptsize SPA-P} as described in the literature \cite{MO08}. An instance $\mathtt{I}$ of {\scriptsize SPA-P} involves a set $\mathcal{S} = \{s_1 , s_2, \ldots , s_{n_1}\}$ of \textit{students}, a set  $\mathcal{P} = \{p_1 , p_2, \ldots , p_{n_2}\}$ of \textit{projects} and a set $\mathcal{L} = \{l_1 , l_2, \ldots , l_{n_3}\}$ of \textit{lecturers}. Each lecturer $l_k \in \mathcal{L}$ offers a non-empty subset of projects, denoted by $P_k$. We assume that $P_1, P_2, \ldots, P_{n_3}$ partitions $\mathcal{P}$ (that is, each project is offered by one lecturer). Also, each student $s_i \in \mathcal{S}$ has an \textit{acceptable} set of projects $A_i \subseteq \mathcal{P}$.  We call a pair $(s_i, p_j) \in \mathcal{S} \times \mathcal{P}$ an \textit{acceptable pair} if $p_j \in A_i$. Moreover $s_i$ ranks $A_i$ in strict order of preference. Similarly, each lecturer $l_k$ ranks $P_k$ in strict order of preference. Finally, each project $p_j \in \mathcal{P}$ and lecturer $l_k \in \mathcal{L}$ has a positive capacity denoted by $c_j$ and $d_k$ respectively.  
%
%

An \textit{assignment} $M$ is a subset of $\mathcal{S} \times \mathcal{P}$ where $(s_i, p_j) \in M$ implies that $s_i$ finds $p_j$ acceptable (that is, $p_j \in A_i$). We define the \textit{size} of $M$ as the number of (student, project) pairs in $M$, denoted $|M|$. If $(s_i, p_j) \in M$, we say that $s_i$ is \textit{assigned to} $p_j$ and $p_j$ is \textit{assigned} $s_i$. Furthermore, we denote the project assigned to student $s_i$ in $M$ as $M(s_i)$ (if $s_i$ is unassigned in $M$ then $M(s_i)$ is undefined). Similarly, we denote the set of students assigned to project $p_j$ in $M$ as $M(p_j)$. For ease of exposition, if $s_i$ is assigned to a project $p_j$ offered by lecturer $l_k$, we may also say that $s_i$ is \textit{assigned to} $l_k$, and $l_k$ is \textit{assigned} $s_i$. Thus we denote the set of students assigned to $l_k$ in $M$ as $M(l_k)$. 

A project $p_j \in \mathcal{P}$ is \textit{full}, \textit{undersubscribed} or \textit{oversubscribed} in $M$ if $|M(p_j)|$ is equal to, less than or greater than $c_j$, respectively. The corresponding terms apply to each lecturer $l_k$ with respect to $d_k$.  We say that a project $p_j \in \mathcal{P}$ is \textit{non-empty} if $|M(p_j)| > 0$.

A \textit{matching} $M$ is an assignment such that $|M(s_i)| \leq 1$ for each $s_i \in \mathcal{S}$, $|M(p_j)| \leq c_j$ for each $p_j \in \mathcal{P}$, and $|M(l_k)| \leq d_k$ for each $l_k \in \mathcal{L}$ (that is, each student is assigned to at most one project, and no project or lecturer is oversubscribed). Given a matching $M$, an acceptable pair $(s_i, p_j) \in (\mathcal{S} \times \mathcal{P}) \setminus M$ is a \textit{blocking pair} of $M$ if the following conditions are satisfied:
\begin{enumerate}
\label{text:blocking-pair-definition}
\item[1.] either $s_i$ is unassigned in $M$ or $s_i$ prefers $p_j$ to $M(s_i)$, and $p_j$ is undersubscribed, and either
	\begin{enumerate}
	\item $s_i \in M(l_k)$ and $l_k$ prefers $p_j$ to $M(s_i)$, or
	\item $s_i \notin M(l_k)$ and $l_k$ is undersubscribed, or
	\item $s_i \notin M(l_k)$ and $l_k$ prefers $p_j$ to his worst non-empty project,	
	\end{enumerate}
	where $l_k$ is the lecturer who offers $p_j$.
\end{enumerate}

If such a pair were to occur, it would undermine the integrity of the matching as the student and lecturer involved would rather be assigned together than remain in their current assignment. With respect to the {\scriptsize SPA-P} instance given in Fig.~\ref{fig:spa-p-instance-1}, $M_1 = \{(s_1, p_3), (s_2, p_1)\}$ is clearly a matching. It is obvious that each of students $s_1$ and $s_2$ is matched to her first ranked project in $M_1$. Although $s_3$ is unassigned in $M_1$, the lecturer offering $p_3$ (the only project that $s_3$ finds acceptable) is assumed to be indifferent among those students who find $p_3$ acceptable. Also $p_3$ is full in $M_1$. Thus, we say that $M_1$ admits no blocking pair.
\begin{figure}[H]
\centering
\begin{tabular}{lll}
Student preferences & \quad \quad & Lecturer preferences \\ 
$s_1$: \quad  $p_3$ \; $p_2$ \;$p_1$ &  & $l_1$: \quad $p_2$ \; $p_1$ \\ 
 
$s_2$: \quad $p_1$ \; $p_2$ &  & $l_2$: \quad $p_3$ \\ 

$s_3$: \quad $p_3$ &  & 
\end{tabular}
\caption{An instance $\mathtt{I}_1$ of {\scriptsize SPA-P}. Each project has capacity $1$, whilst each of lecturer $l_1$ and $l_2$ has capacity $2$ and $1$ respectively.}
\label{fig:spa-p-instance-1}
\end{figure}
Another way in which a matching could be undermined is through a group of students acting together. Given a matching $M$, a \textit{coalition} is a set of students $\{s_{i_0}, \ldots, s_{i_{r-1}}\}$, for some $r \geq 2$ such that each student $s_{i_j}$ ($0 \leq j \leq r-1$) is assigned in $M$ and prefers $M(s_{i_{j+1}})$ to $M(s_{i_j})$, where addition is performed modulo $r$. With respect to Fig.~\ref{fig:spa-p-instance-1}, the matching $M_2 = \{(s_1, p_1), (s_2, p_2), (s_3, p_3)\}$ admits a coalition $\{s_1, s_2\}$, as students $s_1$ and $s_2$ would rather permute their assigned projects in $M_2$ so as to be better off. We note that the number of students assigned to each project and lecturer involved in any such swap remains the same after such a permutation. Moreover, the lecturers involved would have no incentive to prevent the switch from occurring since they are assumed to be indifferent between the students assigned to the projects they are offering. If a matching admits no coalition, we define such matching to be \textit{coalition-free}.

Given an instance $\mathtt{I}$ of {\scriptsize SPA-P}, we define a matching $M$ in $\mathtt{I}$ to be \textit{stable} if $M$ admits no blocking pair and is coalition-free. It turns out that with respect to this definition, stable matchings in $\mathtt{I}$ can have different sizes. Clearly, each of the matchings $M_1 = \{(s_1, p_3), (s_2, p_1)\}$ and $M_3 = \{(s_1, p_2), (s_2, p_1), (s_3, p_3)\}$ is stable in the {\scriptsize SPA-P} instance $\mathtt{I}_1$ shown in Fig.~\ref{fig:spa-p-instance-1}. The varying sizes of the stable matchings produced naturally leads to the problem of finding a maximum cardinality stable matching given an instance of {\scriptsize SPA-P}, which we denote by {\scriptsize MAX-SPA-P}. 
In the next section, we describe our IP model to enable {\scriptsize MAX-SPA-P} to be solved optimally.

\section{An IP model for {\footnotesize MAX-SPA-P}}
\label{sect:ip-model}
Let $\mathtt{I}$ be an instance of {\scriptsize SPA-P} involving a set $\mathcal{S} = \{s_1 , s_2, \ldots , s_{n_1}\}$ of students, a set  $\mathcal{P} = \{p_1 , p_2, \ldots , p_{n_2}\}$ of projects and a set $\mathcal{L} = \{l_1 , l_2, \ldots , l_{n_3}\}$ of lecturers. We construct an IP model $\mathtt{J}$ of $\mathtt{I}$ as follows. Firstly, we create binary variables $x_{i, j} \in \{0, 1\}$ $(1 \leq i \leq n_1, 1 \leq j \leq n_2)$ for each acceptable pair $(s_i, p_j) \in \mathcal{S} \times \mathcal{P}$ such that $x_{i, j}$ indicates whether $s_i$ is assigned to $p_j$ in a solution or not. Henceforth, we denote by $S$ a solution in the IP model $\mathtt{J}$, and we denote by $M$ the matching derived from $S$. If $x_{i,j} = 1$ under $S$ then intuitively $s_i$ is assigned to $p_j$ in $M$, otherwise $s_i$ is not assigned to $p_j$ in $M$. In what follows, we give the constraints to ensure that the assignment obtained from a feasible solution in $\mathtt{J}$ is a matching.
\paragraph{\textbf{Matching Constraints}.}
The feasibility of a matching can be ensured with the following three sets of constraints.
\begin{align}
\label{ineq:studentassignment}
\sum\limits_{p_{j} \in A_{i}} x_{i,j}  \leq 1 &\qquad (1 \leq i \leq n_1), \\
\label{ineq:projectcapacity}
\sum\limits_{i = 1}^{n_1} x_{i,j}   \leq c_j   & \qquad (1 \leq j \leq n_2), \\
\label{ineq:lecturercapacity}
\sum\limits_{i = 1}^{n_1} \; \sum\limits_{p_{j} \in P_k} x_{i,j}  \leq d_k  & \qquad (1 \leq k \leq  n_3)\enspace.
\end{align}
Note that \eqref{ineq:studentassignment} implies that each student $s_i \in \mathcal{S}$ is not assigned to more than one project, while \eqref{ineq:projectcapacity} and \eqref{ineq:lecturercapacity} implies that the capacity of each project $p_j \in \mathcal{P}$ and each lecturer $l_k \in \mathcal{L}$ is not exceeded.

We define $rank(s_i, p_j)$, the \textit{rank} of $p_j$ on $s_i$'s preference list, to be $r+1$ where $r$ is the number of projects that $s_i$ prefers to $p_j$. An analogous definition holds for $rank(l_k, p_j)$, the \textit{rank} of $p_j$ on $l_k$'s preference list. With respect to an acceptable pair $(s_i, p_j)$, we define $S_{i,j} = \{p_{j'} \in A_i: rank(s_i, p_{j'}) \leq rank(s_i, p_j)\}$, the set of projects that $s_i$ likes as much as $p_j$. 
For a project $p_j$ offered by lecturer $l_k \in \mathcal{L}$, we also define ${T_{k,j} = \{p_q \in P_k: rank(l_k, p_j) < rank(l_k, p_q)\}}$, the set of projects that are worse than $p_j$ on $l_k$'s preference list.
 
In what follows, we fix an arbitrary acceptable pair $(s_i, p_j)$ and we impose constraints to ensure that $(s_i, p_j)$ is not a blocking pair of the matching $M$ (that is, $(s_i, p_j)$ is not a type 1(a), type 1(b) or type 1(c) blocking pair of $M$). Firstly, let $l_k$ be the lecturer who offers $p_j$. 
\paragraph{\textbf{Blocking Pair Constraints.}} We define $\theta_{i,j} = 1 - \sum_{p_{j'} \in S_{i, j}}x_{i,j'}$.
Intuitively, $\theta_{i,j} = 1$ if and only if $s_i$ is unassigned in $M$ or prefers $p_j$ to $M(s_i)$. Next we create a binary variable $\alpha_j$ in $\mathtt{J}$ such that $\alpha_j = 1$ corresponds to the case when $p_j$ is undersubscribed in $M$. We enforce this condition by imposing the following constraint.
\begin{eqnarray}
\label{ineq:project-undersubscribed}
c_j \alpha_j \geq c_j - \sum\limits_{i' = 1}^{n_1} x_{i',j}\enspace,
\end{eqnarray}
where $\sum_{i' = 1}^{n_1} x_{i',j} = |M(p_j)|$. If $p_j$ is undersubscribed in $M$ then the RHS of \eqref{ineq:project-undersubscribed} is at least $1$, and this implies that $\alpha_j = 1$. Otherwise, $\alpha_j$ is not constrained. Now let $\gamma_{i,j,k} = \sum_{p_{j'} \in T_{k,j}} x_{i,j'}$.
Intuitively, if $\gamma_{i,j,k} = 1$ in $S$ then $s_i$ is assigned to a project $p_{j'}$ offered by $l_k$ in $M$, where $l_k$ prefers $p_j$ to $p_{j'}$. The following constraint ensures that $(s_i, p_j)$ does not form a type 1(a) blocking pair of $M$.
\begin{align}
\label{ineq:type-a-blockingpair}
\Aboxed{
\theta_{i,j} + \alpha_{j} + \gamma_{i,j,k} \leq 2\enspace.}
\end{align}
Note that if the sum of the binary variables in the LHS of \eqref{ineq:type-a-blockingpair} is less than or equal to $2$, this implies that at least one of the variables, say $\gamma_{i,j,k}$, is $0$. Thus the pair $(s_i, p_j)$ is not a type 1(a) blocking pair of $M$. 

Next we define $\beta_{i,k} = \sum_{p_{j'} \in P_k} x_{i,j'}.$
Clearly, $s_i$ is assigned to a project offered by $l_k$ in $M$ if and only if $\beta_{i,k} = 1$ in $S$.
Now we create a binary variable $\delta_k$ in $\mathtt{J}$ such that $\delta_k = 1$ in $S$ corresponds to the case when $l_k$ is undersubscribed in $M$. We enforce this condition by imposing the following constraint.
\begin{eqnarray}
\label{ineq:lecturer-undersubscribed}
d_k\delta_k \geq d_k - \sum\limits_{i' = 1}^{n_1} \; \sum\limits_{p_{j'} \in P_k} x_{i',j'}\enspace,
\end{eqnarray}
where $\sum_{i' = 1}^{n_1} \; \sum_{p_{j'} \in P_k} x_{i',j'} = |M(l_k)|$. If $l_k$ is undersubscribed in $M$ then the RHS of \eqref{ineq:lecturer-undersubscribed} is at least $1$, and this implies that $\delta_k = 1$. Otherwise, $\delta_k$ is not constrained. The following constraint ensures that $(s_i, p_j)$ does not form a type 1(b) blocking pair of $M$.
\begin{align}
\label{ineq:type-b-blockingpair}
\Aboxed{
\theta_{i,j} + \alpha_{j} + (1 - \beta_{i,k}) + \delta_k \leq 3\enspace.}
\end{align}

We define $D_{k,j} = \{p_{j'} \in P_k: rank(l_k, p_{j'}) \leq rank(l_k, p_j)\}$, the set of projects that $l_k$ likes as much as $p_j$.
Next, we create a binary variable $\eta_{j,k}$ in $\mathtt{J}$ such that $\eta_{j,k} = 1$ if $l_k$ is full and prefers $p_j$ to his worst non-empty project in $S$. We enforce this by imposing the following constraint.
\begin{eqnarray}
\label{ineq:lecturerfullconstraint}
d_k\eta_{j,k} \geq d_k - \sum\limits_{i' = 1}^{n_1} \; \sum\limits_{p_{j'} \in D_{k,j}} x_{i',j'}\enspace.
\end{eqnarray}
Finally, to avoid a type 1(c) blocking pair, we impose the following constraint.
\begin{align}
\label{ineq:type-c-blockingpair}
\Aboxed{
\theta_{i,j} + \alpha_{j} + (1 - \beta_{i,k}) + \eta_{j,k} \leq 3\enspace.}
\end{align}

Next, we give the constraints to ensure that the matching obtained from a feasible solution in $\mathtt{J}$ is coalition-free.
\paragraph{\textbf{Coalition Constraints.}} First, we introduce some additional notation. Given an instance $\mathtt{I}'$ of {\scriptsize{SPA-P}} and a matching $M'$ in $\mathtt{I}'$, we define the \textit{envy graph} $G(M') = (\mathcal{S}, A)$, where the vertex set $\mathcal{S}$ is the set of students in $\mathtt{I}'$, and the arc set $A = \{(s_i, s_{i'}): s_i \mbox{ prefers } M'(s_{i'}) \mbox{ to } M'(s_i)\}$.
It is clear that the matching $M_2 = \{(s_1, p_1), (s_2, p_2), (s_3, p_3)\}$ admits a coalition $\{s_1, s_2\}$ with respect to the instance given in Fig.~\ref{fig:spa-p-instance-1}. The resulting envy graph $G(M_2)$ is illustrated below.
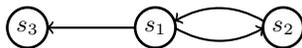
\begin{figure}[H]
\centering
\begin{tikzpicture}[scale=0.85]
\SetVertexNormal[MinSize = 1pt,LineWidth = 1pt,FillColor = white]
\Vertex[x=0, y=0]{$s_1$}[]
\Vertex[x=2, y=0]{$s_2$}
\Vertex[x=-2, y=0]{$s_3$} 
\draw [thick, style={->}] (-0.3, 0) to (-1.7, 0);
\draw [thick, style={->, bend right}](0.3,0) to (1.7,0); 
\draw [thick, style={->}, bend right] (1.7,0.1) to (0.3,0.1); 
\end{tikzpicture}
\label{fig:envygraph}
\caption[figure]{The envy graph $G(M_2)$ corresponding to the {\scriptsize SPA-P} instance in Fig.~\ref{fig:spa-p-instance-1}.}
\end{figure}
Clearly, $G(M')$ contains a directed cycle if and only if $M'$ admits a coalition. Moreover, $G(M')$ is acyclic if and only if it admits a topological ordering. Now to ensure that the matching $M$ obtained from a feasible solution $S$ under $\mathtt{J}$ is coalition-free, we will enforce $\mathtt{J}$ to encode the envy graph $G(M)$ and impose the condition that it must admit a topological ordering.
In what follows, we build on our IP model $\mathtt{J}$ of $\mathtt{I}$.
 
We create a binary variable $e_{i,i'}$ for each $(s_i, s_{i'}) \in \mathcal{S} \times \mathcal{S}$, $s_i \neq s_{i'}$, such that the $e_{i,i'}$ variables will correspond to the adjacency matrix of $G(M)$. For each $i$ and $i'$ ($1 \leq i \leq n_1, \; 1 \leq i' \leq n_1, \; i \neq i'$) and for each $j$ and $j'$ ($1 \leq j \leq n_2, \; 1 \leq j' \leq n_2$) such that $s_i$ prefers $p_{j'}$ to $p_j$, we impose the following constraint.
\begin{align}
\label{ineq:envy-variable}
e_{i,{i'}} + 1 \geq x_{i,j} + x_{{i'},{j'}\enspace.}
\end{align}

If $(s_i, p_j) \in M$ and $(s_{i'}, p_{j'}) \in M$ and $s_i$ prefers $p_{j'}$ to $p_j$, then $e_{i,{i'}} = 1$ and we say $s_i$ \textit{envies} $s_{i'}$. Otherwise, $e_{i,{i'}}$ is not constrained. Next we enforce the condition that $G(M)$ must have a topological ordering. To hold the label of each vertex in a topological ordering, we create an integer-valued variable $v_i$ corresponding to each student $s_i \in \mathcal{S}$ (and intuitively to each vertex in $G(M)$). We wish to enforce the constraint that if $e_{i,{i'}} = 1$ (that is, $(s_i, s_{i'}) \in A$), then $v_i < v_{i'}$ (that is, the label of vertex $s_i$ is smaller than the label of vertex $s_{i'}$). This is achieved by imposing the following constraint for all $i$ and $i'$ ($1 \leq i \leq n_1, \; 1 \leq i' \leq n_1, \; i \neq i'$).
\begin{align}
\label{ineq:topological-ordering}
\Aboxed{
v_i < v_{i'} + n_1(1 - e_{i,i'})\enspace.}
\end{align}
Note that the LHS of \eqref{ineq:topological-ordering} is strictly less than the RHS of \eqref{ineq:topological-ordering} if and only if $G(M)$ does not admit a directed cycle, and this implies that $M$ is coalition-free.
\paragraph{\textbf{Variables}.} We define a collective notation for each variable involved in $\mathtt{J}$ as follows.
\begin{center}
\begin{tabular}{lll}
$X = \{ x_{i,j}: 1 \leq i \leq n_1, 1 \leq j \leq n_2\}$, &  & $\Lambda = \{\alpha_{j}: 1 \leq j \leq n_2\}$, \\
$H = \{ \eta_{j,k}: 1 \leq j \leq n_2, 1 \leq k \leq n_3\}$, &  & $\Delta = \{\delta_{k}: 1 \leq k \leq n_3\}$, \\ 
$E = \{ e_{i,i'}: 1 \leq i \leq n_1, 1 \leq i' \leq n_1\}$, &  & $V = \{ v_{i}: 1 \leq i \leq n_1\}$\enspace.
\end{tabular} 
\end{center}
\paragraph{\textbf{Objective Function.}} The objective function given below is a summation of all the $x_{i,j}$ binary variables.  It seeks to maximize the number of students assigned (that is, the cardinality of the matching).
\begin{align}
\label{ineq:objective-function}
\Aboxed{\max \; \sum\limits_{i = 1}^{n_1} \; \sum\limits_{p_j \in A_i}x_{i,j}\enspace.}
\end{align}
Finally, we have constructed an IP model $\mathtt{J}$ of $\mathtt{I}$ comprising the set of integer-valued variables $X, \Lambda, H, \Delta, E \mbox{ and } V$, the set of constraints \eqref{ineq:studentassignment} - \eqref{ineq:topological-ordering} and an objective function \eqref{ineq:objective-function}. Note that $\mathtt{J}$ can then be used to solve {\scriptsize{MAX-SPA-P}} optimally.
Given an instance $\mathtt{I}$ of {\scriptsize{SPA-P}} formulated as an IP model $\mathtt{J}$ using the above transformation, we have the following lemmas.
\begin{lemma}
\label{lemma:solution-matching}
A feasible solution $S$ to $\mathtt{J}$ corresponds to a stable matching $M$ in $\mathtt{I}$, where $obj(S) = |M|$. 
\end{lemma}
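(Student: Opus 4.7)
The plan is to take a feasible solution $S$ and define $M = \{(s_i,p_j) : x_{i,j}=1 \text{ in } S\}$, then verify in turn that (i) $M$ is a matching, (ii) $M$ admits no blocking pair of type 1(a), 1(b), or 1(c), (iii) $M$ is coalition-free, and (iv) $obj(S)=|M|$. Item (i) is immediate from inequalities \eqref{ineq:studentassignment}--\eqref{ineq:lecturercapacity}, which are exactly the upper-capacity conditions on students, projects and lecturers. Item (iv) is also immediate, since $\sum_{i,j}x_{i,j}$ counts precisely the pairs placed into $M$, and $M$ only contains acceptable pairs by the way the variables $x_{i,j}$ are indexed.

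For item (ii), I would fix an arbitrary acceptable pair $(s_i,p_j)\notin M$ and let $l_k$ be the lecturer offering $p_j$. The key observation is that the auxiliary quantities have their intended semantics under $S$: $\theta_{i,j}=1$ iff $s_i$ is unassigned in $M$ or prefers $p_j$ to $M(s_i)$ (since $\sum_{p_{j'}\in S_{i,j}}x_{i,j'}$ is $1$ exactly when $s_i$ is matched to some project at least as good as $p_j$); inequality \eqref{ineq:project-undersubscribed} forces $\alpha_j=1$ whenever $p_j$ is undersubscribed; $\gamma_{i,j,k}=1$ iff $s_i$ is matched to some project of $l_k$ ranked below $p_j$; $\beta_{i,k}=1$ iff $s_i$ is matched to a project of $l_k$; \eqref{ineq:lecturer-undersubscribed} forces $\delta_k=1$ whenever $l_k$ is undersubscribed; and \eqref{ineq:lecturerfullconstraint} forces $\eta_{j,k}=1$ whenever the number of students of $l_k$ assigned to projects in $D_{k,j}$ is strictly less than $d_k$, which happens precisely when $l_k$ prefers $p_j$ to his worst non-empty project. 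With these semantics established, I would check each blocking-pair type: if $(s_i,p_j)$ were a type 1(a) pair, all three of $\theta_{i,j},\alpha_j,\gamma_{i,j,k}$ would be $1$, violating \eqref{ineq:type-a-blockingpair}; similarly for type 1(b) we would have $\theta_{i,j}=\alpha_j=\delta_k=1$ and $\beta_{i,k}=0$, violating \eqref{ineq:type-b-blockingpair}; and for type 1(c) we would have $\theta_{i,j}=\alpha_j=\eta_{j,k}=1$ and $\beta_{i,k}=0$, violating \eqref{ineq:type-c-blockingpair}.

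For item (iii), I would show the envy graph $G(M)$ is acyclic using the $e_{i,i'}$ and $v_i$ variables. Constraint \eqref{ineq:envy-variable} ensures that whenever $(s_i,p_j),(s_{i'},p_{j'})\in M$ with $s_i$ preferring $p_{j'}$ to $p_j$, both $x_{i,j}$ and $x_{i',j'}$ equal $1$, and so $e_{i,i'}=1$; hence every arc of $G(M)$ is present among the arcs $(s_i,s_{i'})$ with $e_{i,i'}=1$. Constraint \eqref{ineq:topological-ordering} then enforces $v_i<v_{i'}$ along every such arc, so the integer labels $v_i$ yield a topological ordering of a supergraph of $G(M)$. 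A directed cycle in $G(M)$ would produce a strictly decreasing chain of $v_i$ values closing up on itself, a contradiction, so $G(M)$ is acyclic and hence $M$ admits no coalition.

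The main obstacle is the verification for type 1(c), because it requires unpacking the slightly awkward semantics of $\eta_{j,k}$ via \eqref{ineq:lecturerfullconstraint}: I need to argue carefully that ``$l_k$ prefers $p_j$ to his worst non-empty project'' is equivalent to the RHS of \eqref{ineq:lecturerfullconstraint} being at least $1$, namely that fewer than $d_k$ students of $l_k$ are matched to projects in $D_{k,j}$. This in turn uses $|M(l_k)|\le d_k$ combined with the fact that at least one student of $l_k$ sits on a project in $T_{k,j}$. Once this equivalence is laid out cleanly, the remaining checks for (ii) and (iii) are straightforward bookkeeping against the corresponding constraints.
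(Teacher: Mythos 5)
Your proposal is correct and takes essentially the same approach as the paper's own proof: you derive $M$ from the $x_{i,j}$ variables, get feasibility from \eqref{ineq:studentassignment}--\eqref{ineq:lecturercapacity}, rule out the three blocking-pair types against \eqref{ineq:type-a-blockingpair}, \eqref{ineq:type-b-blockingpair} and \eqref{ineq:type-c-blockingpair} using the intended semantics of $\theta_{i,j}$, $\alpha_j$, $\gamma_{i,j,k}$, $\beta_{i,k}$, $\delta_k$, $\eta_{j,k}$, and exclude coalitions via the envy graph and the labels enforced by \eqref{ineq:envy-variable} and \eqref{ineq:topological-ordering}, exactly as the paper does. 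The one blemish is your ``precisely when'' gloss on \eqref{ineq:lecturerfullconstraint} --- its RHS is also positive when $l_k$ is undersubscribed, so $\eta_{j,k}=1$ being forced is \emph{not} equivalent to $l_k$ preferring $p_j$ to his worst non-empty project --- but since the lemma only needs the implication that a type 1(c) configuration forces $\eta_{j,k}=1$, which your final paragraph argues correctly via $|M(l_k)|\le d_k$ together with a student of $l_k$ sitting on a project in $T_{k,j}$, the proof goes through unchanged.
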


\begin{proof}
Assume firstly that $\mathtt{J}$ has a feasible solution $S$. Let $M = \{(s_i, p_j) \in \mathcal{S} \times \mathcal{P}: x_{i,j} = 1\}$ be the assignment in $\mathtt{I}$ generated from $S$. Clearly $obj(S) = |M|$. We note that \eqref{ineq:studentassignment} ensures that each student is assigned in $M$ to at most one project. Moreover, \eqref{ineq:projectcapacity} and \eqref{ineq:lecturercapacity} ensures that the capacity of each project and lecturer is not exceeded in $M$. Thus $M$, is a matching. We will prove that \eqref{ineq:project-undersubscribed} - \eqref{ineq:type-c-blockingpair} guarantees that $M$ admits no blocking pair.

Suppose for a contradiction that there exists some acceptable pair $(s_i, p_j)$ that forms a blocking pair of $M$, where $l_k$ is the lecturer who offers $p_j$. This implies that $s_i$ is either unassigned in $M$ or prefers $p_j$ to $M(s_i)$. In either of these cases, $\sum_{p_{j'} \in S_{i,j}} x_{i,j'} = 0$, and thus $\theta_{i,j} = 1$. Moreover, as $(s_i, p_j)$ is a blocking pair of $M$, $p_j$ has to be undersubscribed in $M$, and thus $\sum_{i'=1}^{n_1} x_{i',j} < c_j$. This implies that the RHS of \eqref{ineq:project-undersubscribed} is strictly greater than $0$, and since $S$ is a feasible solution to $\mathtt{J}$, $\alpha_j = 1$.

Now suppose $(s_i, p_j)$ is a type 1(a) blocking pair of $M$. This implies $M(s_i) = p_{j''}$ for some $p_{j''} \in P_k$, where $l_k$ prefers $p_j$ tp $p_{j''}$. Thus $\gamma_{i,j,k} = \sum_{p_{j'} \in T_{k,j}} x_{i,j'} = 1$, which implies that the LHS of \eqref{ineq:type-a-blockingpair} is strictly greater than $2$. Thus $S$ is infeasible, a contradiction.

Next suppose $(s_i, p_j)$ is a type 1(b) blocking pair of $M$. This implies $s_i \notin M(l_k)$ and thus $1 - \beta_{i,k} = 1 - \sum_{p_j' \in P_k} x_{i,j'} = 1$. Also, $l_k$ has to be undersubscribed in $M$ which implies that the RHS of \eqref{ineq:lecturer-undersubscribed} is strictly greater than $0$, and thus $\delta_k = 1$. Hence the LHS of \eqref{ineq:type-b-blockingpair} is strictly greater than $3$, a contradiction, since $S$ is a feasible solution.

Next suppose $(s_i, p_j)$ is a type 1(c) blocking pair of $M$. This implies that $s_i \notin M(l_k)$ and thus $\beta_{i,k} = 0$. Also $l_k$ is full in $M$ and prefers $p_j$ to $p_z$, where $p_z$ is $l_k$'s worst non-empty project in $M$. This implies that the RHS of \eqref{ineq:lecturerfullconstraint} is strictly greater than $0$, and thus $\eta_{j,k} = 1$. Hence the LHS of \eqref{ineq:type-c-blockingpair} is strictly greater than $3$, and thus $S$ is infeasible, a contradiction.

Finally, we show that \eqref{ineq:envy-variable} and \eqref{ineq:topological-ordering} ensure that $M$ is coalition-free. Suppose for a contradiction that $M$ admits a coalition $\{s_{i_0}, \ldots, s_{i_{r-1}}\}$, for some $r \geq 2$. This implies that for each $t$ $(0 \leq t \leq r-1)$, $s_{i_t}$ prefers $M(s_{i_{t+1}})$ to $M(s_{i_t})$, where addition is taken modulo $r$, and hence $e_{i_t, i_{t+1}} = 1$, by \eqref{ineq:envy-variable}. It follows from \eqref{ineq:topological-ordering} that $v_{i_0} < v_{i_1} < \cdots < v_{i_{r-2}} < v_{i_{r-1}} < v_{i_r} = v_{i_0}$, a contradiction. Hence $M$ is coalition-free, and thus $M$ is a stable matching.
\end{proof}

\begin{lemma}
\label{lemma:matching-solution}
A stable matching $M$ in $\mathtt{I}$ corresponds to a feasible solution $S$ to $\mathtt{J}$, where $|M| = obj(S)$.
\end{lemma}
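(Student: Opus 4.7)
The plan is to construct an explicit feasible solution $S$ of $\mathtt{J}$ from the stable matching $M$ and then verify each constraint in turn; the objective equality will be immediate from the definition of the $x_{i,j}$ variables.

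First I would define the variables of $S$ directly from $M$: set $x_{i,j}=1$ iff $(s_i,p_j)\in M$; set $\alpha_j=1$ iff $p_j$ is undersubscribed in $M$; set $\delta_k=1$ iff $l_k$ is undersubscribed in $M$; set $\eta_{j,k}=1$ iff $l_k$ is full in $M$ and prefers $p_j$ to his worst non-empty project; set $e_{i,i'}=1$ iff $s_i$ envies $s_{i'}$, i.e., both are matched and $s_i$ prefers $M(s_{i'})$ to $M(s_i)$; and, using the fact that $M$ is coalition-free so that the envy graph $G(M)$ is acyclic, pick any topological ordering of $G(M)$ and let $v_i$ be the index of $s_i$ in that ordering (so $v_i \in \{1,\dots,n_1\}$). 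The auxiliary quantities $\theta_{i,j}$, $\gamma_{i,j,k}$ and $\beta_{i,k}$ are then determined by these assignments.

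Next I would verify the constraints. Constraints \eqref{ineq:studentassignment}, \eqref{ineq:projectcapacity} and \eqref{ineq:lecturercapacity} hold directly because $M$ is a matching. Constraints \eqref{ineq:project-undersubscribed}, \eqref{ineq:lecturer-undersubscribed} and \eqref{ineq:lecturerfullconstraint} are straightforward to check: whenever their RHS is positive, the corresponding auxiliary variable has been set to $1$ by definition, so the inequality holds; when the RHS is non-positive, the LHS (which is non-negative) automatically dominates. The heart of the argument is the blocking pair constraints \eqref{ineq:type-a-blockingpair}, \eqref{ineq:type-b-blockingpair} and \eqref{ineq:type-c-blockingpair}. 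For each fixed acceptable pair $(s_i,p_j)$ with offering lecturer $l_k$, I would suppose for a contradiction that the corresponding inequality is violated; since all involved quantities are binary, a violation forces each summand to equal $1$. I would then translate these equalities back into statements about $M$: $\theta_{i,j}=1$ gives that $s_i$ is unassigned or prefers $p_j$ to $M(s_i)$; $\alpha_j=1$ together with \eqref{ineq:project-undersubscribed} forces $p_j$ to be undersubscribed (here one has to note that while $\alpha_j$ is ``not constrained'' when the RHS of \eqref{ineq:project-undersubscribed} is non-positive, I have defined $\alpha_j=0$ precisely in that case); the remaining summands being $1$ would yield the type~1(a), 1(b) or 1(c) conditions respectively on $l_k$. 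Together these assemble into a blocking pair, contradicting the stability of $M$.

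Finally I would dispose of the coalition constraints. Constraint \eqref{ineq:envy-variable} is immediate: if its RHS equals $2$, then $(s_i,p_j)\in M$ and $(s_{i'},p_{j'})\in M$ with $s_i$ preferring $p_{j'}$ to $p_j$, so by definition $e_{i,i'}=1$. For \eqref{ineq:topological-ordering}, if $e_{i,i'}=0$ the RHS is at least $v_{i'}+n_1 > v_i$ since $v_i \leq n_1$; if $e_{i,i'}=1$ then $(s_i,s_{i'})$ is an arc of $G(M)$, and by choice of the topological ordering $v_i < v_{i'}$, so the strict inequality holds. Summing the $x_{i,j}$ yields $|M|$, giving $obj(S)=|M|$. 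The main obstacle is the careful bookkeeping in the blocking pair case analysis, in particular being explicit about the ``otherwise set to $0$'' convention for the slack variables $\alpha_j$, $\delta_k$ and $\eta_{j,k}$ so that the implications used in the contradiction are genuinely forced rather than merely allowed.
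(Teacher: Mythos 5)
Your construction and verification mirror the paper's own proof almost step for step: the same variable settings, the same contradiction argument translating a violated blocking-pair constraint into a type 1(a)/1(b)/1(c) blocking pair, and the same topological-ordering argument for \eqref{ineq:envy-variable} and \eqref{ineq:topological-ordering}. However, there is one concrete gap, in the treatment of $\eta_{j,k}$. You set $\eta_{j,k}=1$ iff $l_k$ is full in $M$ and prefers $p_j$ to his worst non-empty project, and then claim that for \eqref{ineq:lecturerfullconstraint} ``whenever the RHS is positive, the corresponding auxiliary variable has been set to $1$ by definition.'' That claim is false under your definition: the RHS of \eqref{ineq:lecturerfullconstraint} is $d_k - \sum_{i'}\sum_{p_{j'}\in D_{k,j}} x_{i',j'}$, which is strictly positive whenever fewer than $d_k$ students are assigned to projects in $D_{k,j}$ --- in particular whenever $l_k$ is \emph{undersubscribed}, a case in which your $\eta_{j,k}$ is $0$. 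Concretely, take $d_k=2$, $l_k$ offering $p_1$ then $p_2$, with one student assigned to $p_1$ and none to $p_2$: for $p_j=p_1$ the RHS of \eqref{ineq:lecturerfullconstraint} equals $1$ while $d_k\eta_{1,k}=0$, so your assignment is infeasible. This is exactly the one place where your own stated goal --- making the ``otherwise set to $0$'' conventions genuinely forced --- breaks down.

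The fix is to set $\eta_{j,k}=1$ whenever fewer than $d_k$ students are assigned in $M$ to projects in $D_{k,j}$ (equivalently: $l_k$ is undersubscribed, or $l_k$ is full and prefers $p_j$ to his worst non-empty project), which makes \eqref{ineq:lecturerfullconstraint} hold; you must then recheck \eqref{ineq:type-c-blockingpair} with this enlarged set of pairs having $\eta_{j,k}=1$. A violation there forces $\theta_{i,j}=1$, $\alpha_j=1$, $\beta_{i,k}=0$ and $\eta_{j,k}=1$; if $l_k$ is full this yields a type 1(c) blocking pair as you argue, and if $l_k$ is undersubscribed it yields a type 1(b) blocking pair --- either way contradicting the stability of $M$, so the lemma still goes through. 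For what it is worth, the paper's own proof sets $\eta_{j,k}$ exactly as you do and simply asserts that \eqref{ineq:lecturerfullconstraint} ``is satisfied,'' so it shares this slip; but your write-up turns it into an explicit claim that fails, so it needs the repair above. Everything else --- the matching constraints, the handling of $\alpha_j$ and $\delta_k$, the type 1(a) and 1(b) cases, the coalition constraints, and the objective equality $obj(S)=|M|$ --- is correct and coincides with the paper's argument.
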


\begin{proof}
Let $M$ be a stable matching in $\mathtt{I}$. First we set all the binary variables involved in $\mathtt{J}$ to $0$. For all $(s_i, p_j) \in M$, we set $x_{i, j} = 1$. Now, since $M$ is a matching, it is clear that \eqref{ineq:studentassignment} - \eqref{ineq:lecturercapacity} is satisfied. For any acceptable pair $(s_i, p_j) \in (\mathcal{S} \times \mathcal{P}) \setminus M$ such that $s_i$ is unassigned in $M$ or prefers $p_j$ to $M(s_i)$, we set $\theta_{i,j} = 1$. For any project $p_j \in \mathcal{P}$ that is undersubscribed in $M$, we set $\alpha_j = 1$ and thus \eqref{ineq:project-undersubscribed} is satisfied. For \eqref{ineq:type-a-blockingpair} not to be satisfied, its LHS must be strictly greater than $2$. This would only happen if there exists $(s_i, p_j) \in (\mathcal{S} \times \mathcal{P}) \setminus M$, where $l_k$ is the lecturer who offers $p_j$, such that $\theta_{i,j} = 1$, $\alpha_j = 1$ and $\gamma_{i,j,k} = 1$. This implies that either $s_i$ is assigned in $M$ to a project $p_{j'}$ offered by $l_k$ such that $s_i$ prefers $p_j$ to $p_{j'}$, $p_j$ is undersubscribed in $M$, and $l_k$ prefers $p_j$ to $p_{j'}$. Thus $(s_i, p_j)$ is a type 1(a) blocking pair of $M$, a contradiction to the stability of $M$. Hence \eqref{ineq:type-a-blockingpair} is satisfied. 

Now for any lecturer $l_k \in \mathcal{L}$ that is undersubscribed in $M$, we set $\delta_k = 1$. Thus \eqref{ineq:lecturer-undersubscribed} is satisfied. Suppose \eqref{ineq:type-b-blockingpair} is not satisfied. This would only happen if there exists $(s_i, p_j) \in (\mathcal{S} \times \mathcal{P}) \setminus M$, where $l_k$ is the lecturer who offers $p_j$, such that $\theta_{i,j} = 1$, $\alpha_j = 1$, $\beta_{i,k} = 0$ and $\delta_{k} = 1$. This implies that either $s_i$ is unassigned in $M$ or prefers $p_j$ to $M(s_i)$, $s_i \notin M(l_k)$, and each of $p_j$ and $l_k$ is undersubscribed. Thus $(s_i, p_j)$ is a type 1(b) blocking pair of $M$, a contradiction to the stability of $M$. Hence \eqref{ineq:type-b-blockingpair} is satisfied.

Suppose $l_k$ is a lecturer in $\mathcal{L}$ and $p_j$ is any project on $l_k$'s preference list. Let $p_z$ be $l_k$'s worst non-empty project in $M$. If $l_k$ is full in $M$ and prefers $p_j$ to $p_z$, we set $\eta_{j,k} = 1$. Then \eqref{ineq:lecturerfullconstraint} is satisfied. Now suppose \eqref{ineq:type-c-blockingpair} is not satisfied. This would only happen if there exists $(s_i, p_j) \in (\mathcal{S} \times \mathcal{P}) \setminus M$, where $l_k$ is the lecturer who offers $p_j$, such that $\theta_{i,j} = 1, \alpha_j = 1, \beta_{i,k} = 0$ and $\eta_{j,k} = 1$. This implies that either $s_i$ is unassigned in $M$ or prefers $p_j$ to $M(s_i)$, $s_i \notin M(l_k)$, $p_j$ is undersubscribed and $l_k$ prefers $p_j$ to his worst non-empty project in $M$. Thus $(s_i, p_j)$ is a type 1(c) blocking pair of $M$, a contradiction to the stability of $M$. Hence \eqref{ineq:type-c-blockingpair} is satisfied.

We denote by $G(M) = (\mathcal{S}, A)$ the envy graph of $M$. Suppose $s_i$ and $s_{i'}$ are any two distinct students in $\mathcal
S$ such that $(s_i, p_j) \in M$, $(s_{i'}, p_{j'}) \in M$ and $s_i$ prefers $p_{j'}$ to $p_j$ (that is, $(s_i, s_{i'}) \in A$), we set $e_{i,i'} = 1$. Thus \eqref{ineq:envy-variable} is satisfied. Since $M$ is a stable matching, $M$ is coalition-free. This implies that $G(M)$ is acyclic and has a topological ordering $\sigma : \mathcal{S} \rightarrow \{1, 2, \ldots, n_1\}$. For each $i$ ($1 \leq i \leq n_1$), let $v_i = \sigma(s_i)$. Now suppose \eqref{ineq:topological-ordering} is not satisfied. This implies that there exist vertices $s_i$ and $s_{i'}$ in $G(M)$ such that $v_i \geq v_{i'} + n_1(1 - e_{i,i'})$. This is only possible if $ e_{i,i'} = 1$ since $1 \leq v_i \leq n_1$ and $1 \leq v_{i'} \leq n_1$. Hence $v_i \geq v_{i'}$, a contradiction to the fact that $\sigma$ is a topological ordering of $G(M)$ (since $(s_i, s_{i'}) \in A$ implies $v_i < v_{i'}$). Hence $S$, comprising the above assignment of values to the variables in $X \cup \Lambda \cup H \cup \Delta \cup E \cup V$, is a feasible solution to $\mathtt{J}$; and clearly $|M| = obj(S)$.
\end{proof}

Lemmas \ref{lemma:solution-matching} and \ref{lemma:matching-solution} immediately give rise to the following theorem regarding the correctness of $\mathtt{J}$.
\begin{restatable}[]{theorem}{correctness}
\label{thrm}
A feasible solution to $\mathtt{J}$ is optimal if and only if the corresponding stable matching in $\mathtt{I}$ is of maximum cardinality.
\end{restatable}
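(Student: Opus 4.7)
The plan is to derive the theorem directly from Lemmas~\ref{lemma:solution-matching} and \ref{lemma:matching-solution}, which together establish a size-preserving correspondence between feasible solutions of $\mathtt{J}$ and stable matchings of $\mathtt{I}$. Since the objective function \eqref{ineq:objective-function} simply sums the $x_{i,j}$ variables, and since in each direction of the correspondence $|M| = obj(S)$, optimality of $S$ in $\mathtt{J}$ is equivalent to maximum cardinality of the associated stable matching in $\mathtt{I}$. The proof will thus be a short contrapositive (or direct) double implication argument with no new technical content.

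For the forward direction, I would assume $S$ is an optimal feasible solution to $\mathtt{J}$ and let $M$ be the stable matching obtained from $S$ via Lemma~\ref{lemma:solution-matching}, so that $|M| = obj(S)$. Suppose for a contradiction that some stable matching $M'$ in $\mathtt{I}$ satisfies $|M'| > |M|$. Applying Lemma~\ref{lemma:matching-solution} to $M'$ yields a feasible solution $S'$ to $\mathtt{J}$ with $obj(S') = |M'| > |M| = obj(S)$, contradicting the optimality of $S$. Hence $M$ is of maximum cardinality among all stable matchings in $\mathtt{I}$.

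For the reverse direction, I would assume $M$ is a maximum cardinality stable matching in $\mathtt{I}$ and use Lemma~\ref{lemma:matching-solution} to produce a feasible solution $S$ to $\mathtt{J}$ with $obj(S) = |M|$. If $S$ were not optimal, there would exist a feasible $S^\star$ with $obj(S^\star) > obj(S)$, and applying Lemma~\ref{lemma:solution-matching} to $S^\star$ would yield a stable matching $M^\star$ in $\mathtt{I}$ with $|M^\star| = obj(S^\star) > |M|$, contradicting the maximum cardinality of $M$. Thus $S$ is optimal, completing the equivalence.

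There is really no main obstacle here: the substantive content (matching feasibility, absence of blocking pairs, coalition-freeness, and size preservation) has already been discharged inside the two lemmas, so the theorem reduces to a routine bookkeeping argument. The only subtlety to articulate carefully is that the two lemmas give us, in each direction, the required inequality $obj \ge |M|$ or $|M| \ge obj$ on objective values, so contradictions can always be produced by transporting a hypothetical better object across the correspondence.
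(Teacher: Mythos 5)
Your proposal is correct and follows essentially the same route as the paper: both directions are derived from Lemmas~\ref{lemma:solution-matching} and \ref{lemma:matching-solution} by transporting a hypothetically better solution or matching across the size-preserving correspondence to obtain a contradiction. In fact your write-up of the reverse direction is slightly more explicit than the paper's, which simply says it follows ``similarly.''
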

\begin{proof}
Let $S$ be an optimal solution to $\mathtt{J}$. Then by Lemma \ref{lemma:solution-matching}, $S$ corresponds to a stable matching $M$ in $\mathtt{I}$ such that $obj(S) = |M|$. Suppose $M$ is not of maximum cardinality. Then there exists a stable matching $M'$ in $\mathtt{I}$ such that $|M'| > |M|$. By Lemma \ref{lemma:matching-solution}, $M'$ corresponds to a feasible solution $S'$ to $\mathtt{J}$ such that $obj(S') = |M'| > |M| = obj(S)$. This is a contradiction, since $S$ is an optimal solution to $\mathtt{J}$. Hence $M$ is a maximum stable matching in $\mathtt{I}$. Similarly, if $M$ is a maximum stable matching in $\mathtt{I}$ then $M$ corresponds to an optimal solution $S$ to $\mathtt{J}$.
\end{proof}
    
\section{Empirical Analysis}
\label{sect:empirical-analysis}
In this section we present results from an empirical analysis that investigates how the sizes of the stable matchings produced by the approximation algorithms compares to the optimal solution obtained from our IP model, on {\scriptsize SPA-P} instances that are both randomly-generated and derived from real datasets. 
\subsection{Experimental Setup}
There are clearly several parameters that can be varied, such as the number of students, projects and lecturers; the length of the students' preference lists; as well as the total capacities of the projects and lecturers.
For each range of values for the first two parameters, we generated a set of random {\scriptsize SPA-P} instances. In each set, we record the average size of a stable matching obtained from running the approximation algorithms and the IP model. Further, we consider the average time taken for the IP model to find an optimal solution.

By design, the approximation algorithms were randomised with respect to the sequence in which students apply to projects, and the choice of students to reject when projects and/or lecturers become full. 
In the light of this, for each dataset, we also run the approximation algorithms 100 times and record the size of the largest stable matching obtained over these runs.  Our experiments therefore involve five algorithms: the optimal IP-based algorithm, the two approximation algorithms run once, and the two approximation algorithms run 100 times.

We performed our experiments on a machine with dual Intel Xeon CPU E5-2640 processors with 64GB of RAM, running Ubuntu 14.04. Each of the approximation algorithms was implemented in Java\footnote{\label{github}https://github.com/sofiat-olaosebikan/spa-p-isco-2018}. For our IP model, we carried out the implementation using the Gurobi optimisation solver in Java\textsuperscript{\ref{github}}. For correctness testing on these implementations, we designed a stability checker which verifies that the matching returned by the approximation algorithms and the IP model does not admit a blocking pair or a coalition. 
 
\subsection{Experimental Results}
\label{subsect:experimental-results}
\subsubsection{Randomly-generated Datasets.} 
All the {\scriptsize SPA-P} instances we randomly generated involved $n_1$ students ($n_1$ is henceforth referred to as the size of the instance), $0.5n_1$ projects, $0.2n_1$ lecturers and $1.1n_1$ total project capacity which was randomly distributed amongst the projects. The capacity for each lecturer $l_k$ was chosen randomly to lie between the highest capacity of the projects offered by $l_k$ and the sum of the capacities of the projects that $l_k$ offers.
In the first experiment, we present results obtained from comparing the performance of the IP model, with and without the coalition constraints in place.
\paragraph{\textbf{Experiment 0.}} We increased the number of students $n_1$ while maintaining a ratio of projects, lecturers, project capacities and lecturer capacities as described above. For various values of $n_1 \; (100 \leq n_1 \leq 1000)$ in increments of $100$, we created $100$ randomly-generated instances. Each student's preference list contained a minimum of $2$ and a maximum of $5$ projects. With respect to each value of $n_1$, we obtained the average time taken for the IP solver to output a solution, both with and without the coalition constraints being enforced. The results, displayed in Table~\ref{table:ip-time} show that when we removed the coalition constraints, the average time for the IP solver to output a solution is significantly faster than when we enforced the coalition constraints.

In the remaining experiments, we thus remove the constraints that enforce the absence of a coalition in the solution.  We are able to do this for the purposes of these experiments because the largest size of a stable matching is equal to the largest size of a matching that potentially admits a coalition but admits no blocking pair\footnote{This holds because the number of students assigned to each project and lecturer in the matching remains the same even after the students involved in such coalition permute their assigned projects.}, and we were primarily concerned with measuring stable matching cardinalities.  However the absence of the coalition constraints should be borne in mind when interpreting the IP solver runtime data in what follows.

In the next two experiments, we discuss results obtained from running the five algorithms on randomly-generated datasets.
	\paragraph{\textbf{Experiment 1.}} As in the previous experiment, we maintained the ratio of the number of students to projects, lecturers and total project capacity; as well as the length of the students' preference lists. For various values of $n_1 \; (100 \leq n_1 \leq 2500)$ in increments of $100$, we created $1000$ randomly-generated instances. With respect to each value of $n_1$, we obtained the average sizes of stable matchings constructed by the five algorithms run over the $1000$ instances. The result displayed in Fig.~\ref{experiment1} (and also in Fig.~\ref{experiment2}) shows the ratio of the average size of the stable matching produced by the approximation algorithms with respect to the maximum cardinality matching produced by the IP solver.

Figure~\ref{experiment1} shows that each of the approximation algorithms produces stable matchings with a much higher cardinality from multiple runs, compared to running them only once. Also, the average time taken for the IP solver to find a maximum cardinality matching increases as the size of the instance increases, with a running time of less than one second for instance size $100$, increasing roughly linearly to $13$ seconds for instance size $2500$ (see Fig.~\ref{experiment1time}).

\paragraph{\textbf{Experiment 2.}} 
In this experiment, we varied the length of each student's preference list while maintaining a fixed number of students, projects, lecturers and total project capacity. For various values of $x$ ($2 \leq x \leq 10$), we generated $1000$ instances, each involving $1000$ students, with each student's preference list containing exactly $x$ projects. The result for all values of $x$ is displayed in Fig.~\ref{experiment2}. Figure~\ref{experiment2} shows that as we increase the preference list length, the stable matchings produced by each of the approximation algorithms gets close to having maximum cardinality. It also shows that with a preference list length greater than $5$, the $\frac{3}{2}$-approximation algorithm produces an optimal solution, even on a single run. Moreover, the average time taken for the IP solver to find a maximum matching increases as the length of the students' preference lists increases, with a running time of two seconds when each student's preference list is of length $2$, increasing roughly linearly to $17$ seconds when each student's preference list is of length $10$ (see Fig.~\ref{experiment2time}).
\subsubsection{Real Datasets.}
The real datasets in this paper are based on actual student preference data and manufactured lecturer data from previous runs of student-project allocation processes at the School of Computing Science, University of Glasgow. Table \ref{table:realdatasets} shows the properties of the real datasets, where $n_1, n_2$ and $n_3$ denotes the number of students, projects and lecturers respectively; and $l$ denotes the length of each student's preference list. For all these datasets, each project has a capacity of $1$. In the next experiment, we discuss how the lecturer preferences were generated. We also discuss the results obtained from running the five algorithms on the corresponding {\scriptsize SPA-P} instances.
\paragraph{\textbf{Experiment 3.}} We derived the lecturer preference data from the real datasets as follows. For each lecturer $l_k$, and for each project $p_j$ offered by $l_k$, we obtained the number $a_j$ of students  that find $p_j$ acceptable. 
Next, we generated a strict preference list for $l_k$ by arranging $l_k$'s proposed projects in (i) a random manner, (ii) ascending order of $a_j$, and (iii) descending order of $a_j$, where (ii) and (iii) are taken over all projects that $l_k$ offers. Table~\ref{table:realdatasets} shows the size of stable matchings obtained from the five algorithms, where $A, B, C, D$ and $E$ denotes the solution obtained from the IP model, 100 runs of $\frac{3}{2}$-approximation algorithm, single run of $\frac{3}{2}$-approximation algorithm, 100 runs of $2$-approximation algorithm, and single run of $2$-approximation algorithm respectively. The results are essentially consistent with the findings in the previous experiments, that is, the $\frac{3}{2}$-approximation algorithm produces stable matchings whose sizes are close to optimal.
\begin{table}[H]
\caption{Results for Experiment 0. Average time (in seconds) for the IP solver to output a solution, both with and without the coalition constraints being enforced.}
\centering
\setlength{\tabcolsep}{0.2em}
\label{table:ip-time}
\begin{tabular}{c|c|c|c|c|c|c|c|c|c|c}
\hline
Size of instance & $100$ & $200$ & $300$ & $400$ & $500$ & $600$ & $700$ & $800$ & $900$  & $1000$\\ 
\hline
Av.\ time without coalition & $0.12$ & $0.27$ & $0.46$ & $0.69$ & $0.89$ & $1.17$ & $1.50$ & $1.86$ & $2.20$ & $2.61$\\ 
\hline 
Av.\ time with coalition & $0.71$ & $2.43$ & $4.84$ & $9.15$ & $13.15$ & $19.34$ & $28.36$ & $38.18$ & $48.48$    & $63.50$\\  
\end{tabular} 
\end{table}
\vspace{-0.6cm}
\begin{figure}[H]
\centering
\subfigure[]{%
\label{experiment1}%
\includegraphics[width=5.9cm]{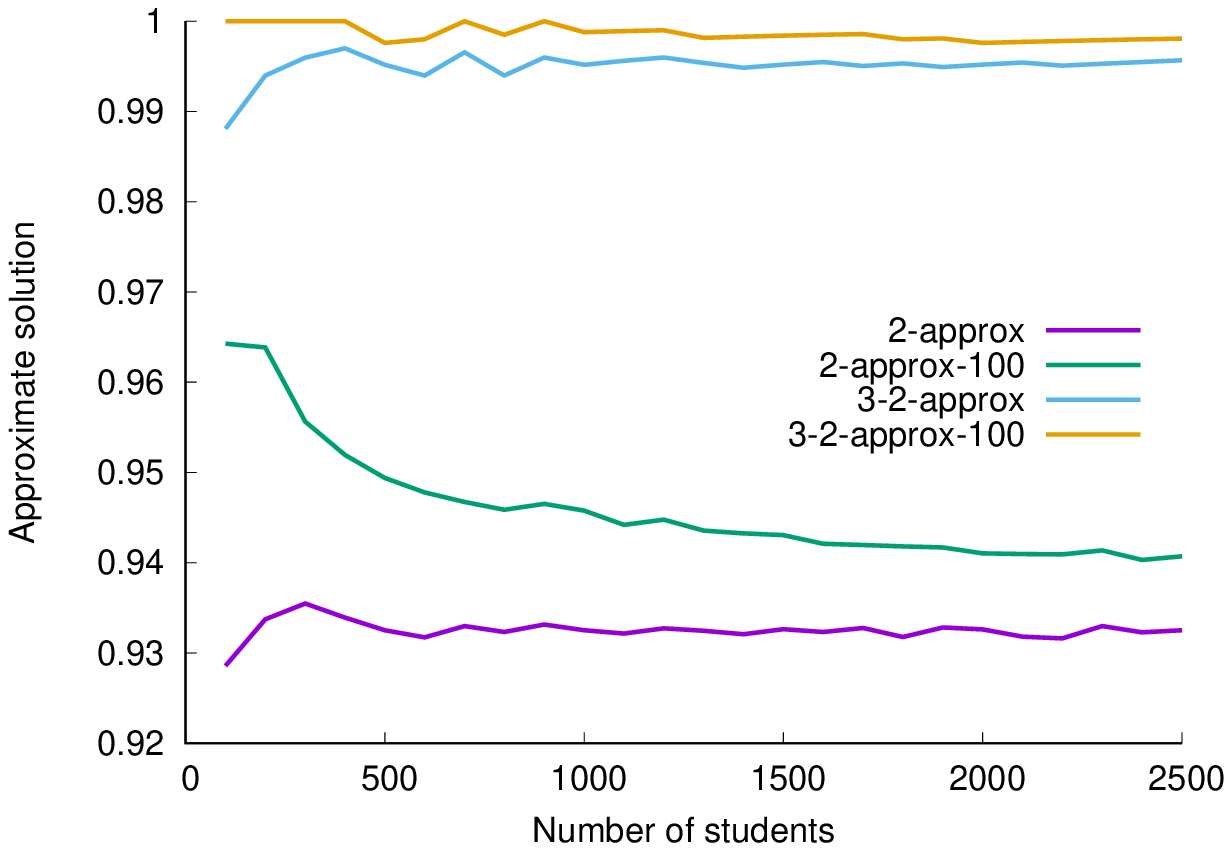}
}%
\quad
\subfigure[]{%
\label{experiment1time}%
\includegraphics[width=5.8cm]{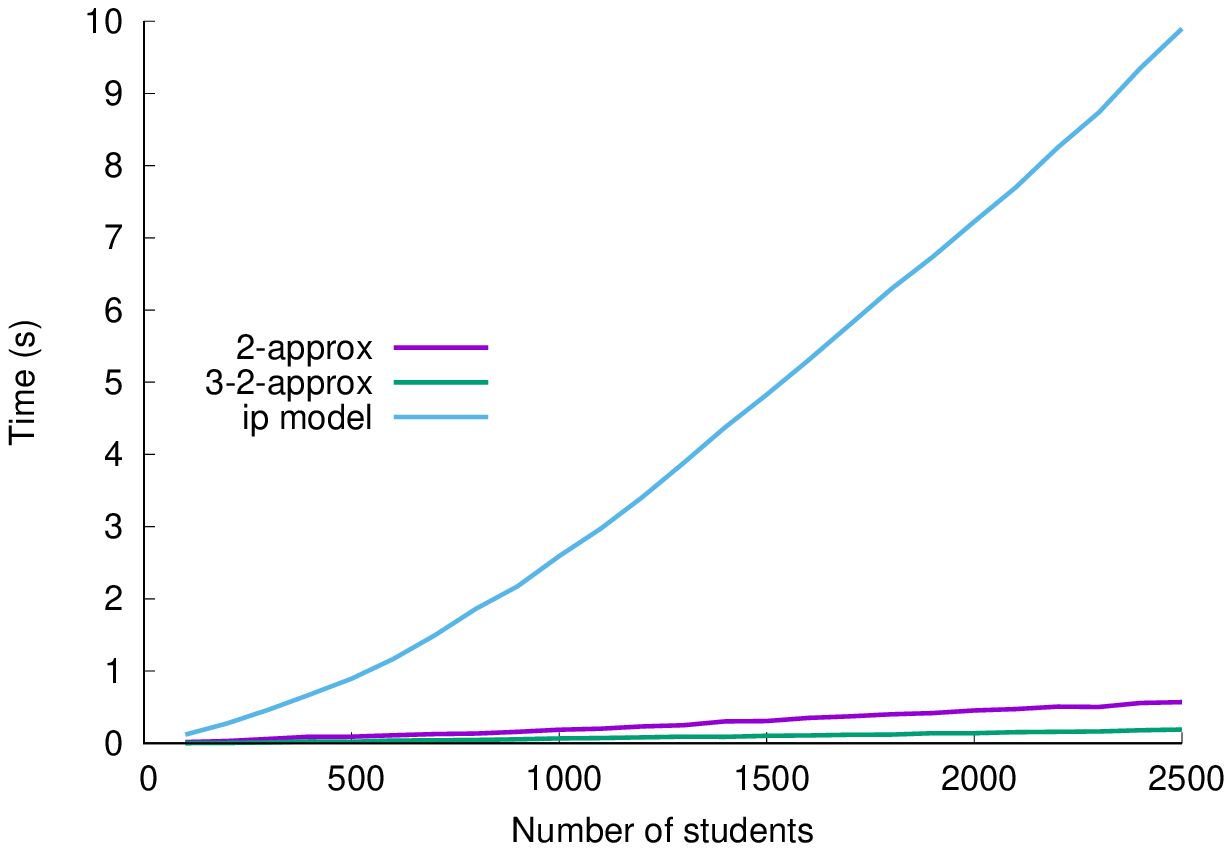}}%
\vspace{-0.2cm}
\caption{Result for Experiment 1.}
\label{fig1}
\end{figure}
\vspace{-0.9cm}
\begin{figure}[H]
\centering
\subfigure[]{%
\label{experiment2}%
\includegraphics[width=5.8cm]{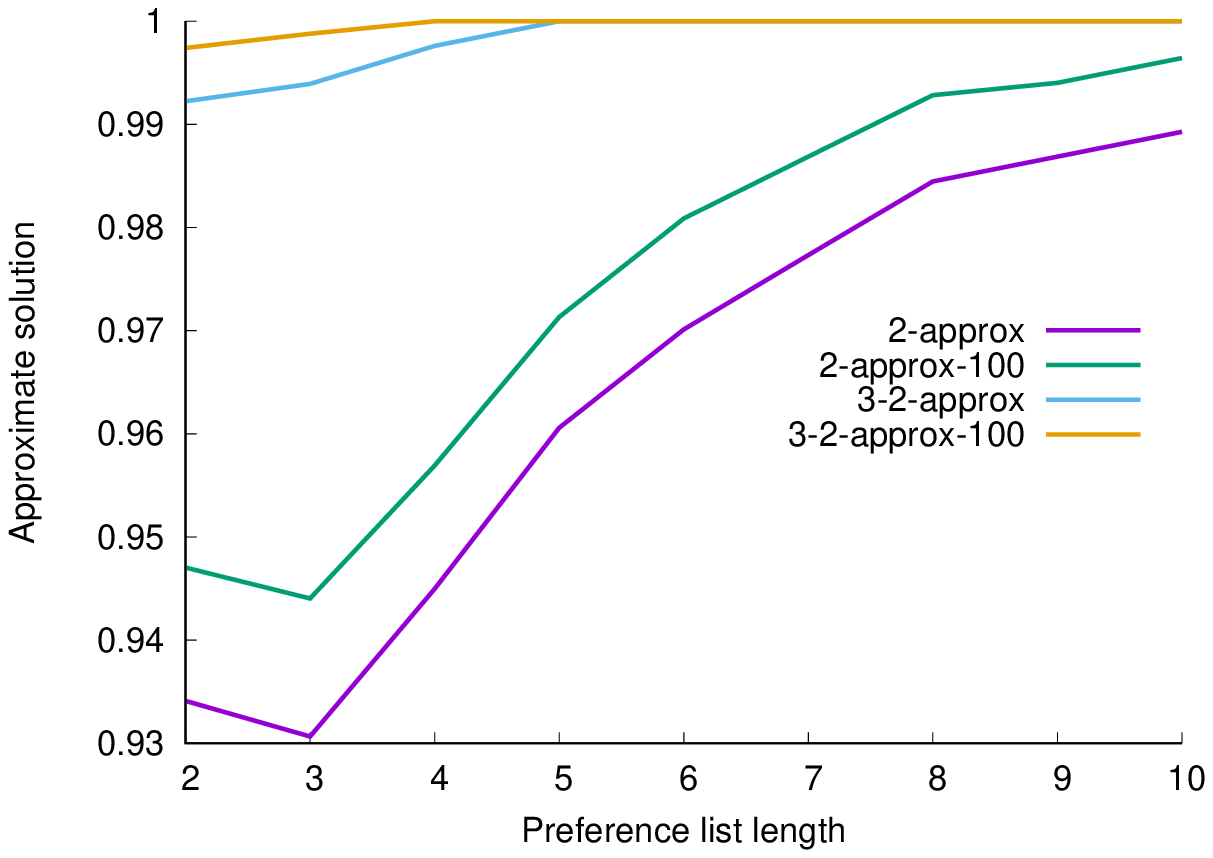}
}%
\quad
\subfigure[]{%
\label{experiment2time}%
\includegraphics[width=5.8cm]{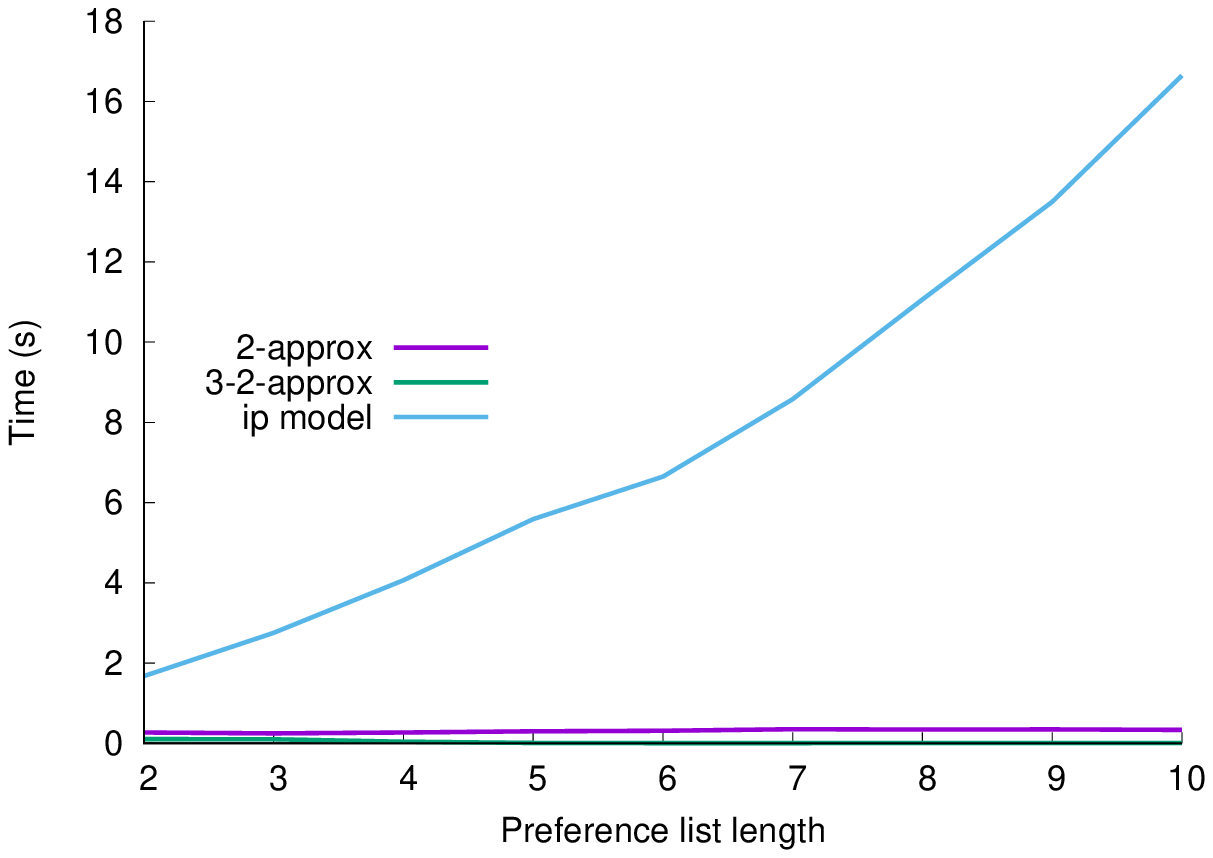}}%
\vspace{-0.2cm}
\caption{Result for Experiment 2.}
\label{fig2}
\end{figure}
\vspace{-0.6cm}
\begin{table}[H]
\caption{Properties of the real datasets and results for Experiment 3.}
\setlength{\tabcolsep}{0.4em}
\label{table:realdatasets}
\begin{tabular}{c|c|c|c|c||c|c|c|c|c||c|c|c|c|c||c|c|c|c|c}
\hline
\multicolumn{5}{c|}{}& \multicolumn{5}{c||}{Random} & \multicolumn{5}{c||}{Most popular} & \multicolumn{5}{c}{Least popular}\\
\hline
Year & $n_1$ & $n_2$ & $n_3$ & $l$ & $A$ & $B$ & $C$ & $D$ & $E$  & $A$ & $B$ & $C$ & $D$ & $E$  & $A$ & $B$ & $C$ & $D$ & $E$\\ 
\hline
\hline 
2014 & $55$ & $149$ & $38$ & $6$ & $55$ & $55$ & $55$ & $54$ & $53$     & $55$ & $55$ & $55$ & $54$ & $50$     & $55$ & $55$ & $55$ & $54$ & $52$\\ 
\hline 
2015 & $76$ & $197$ & $46$ & $6$ & $76$ & $76$ & $76$ & $76$ & $72$    & $76$ & $76$ & $76$ & $76$ & $72$    & $76$ & $76$ & $76$ & $76$ & $75$\\ 
\hline 
2016 & $92$ & $214$ & $44$ & $6$ & $84$ & $82$ & $83$ & $77$ & $75$    & $85$ & $85$ & $83$ & $79$ & $76$    & $82$ & $80$ & $77$ & $76$ & $74$\\ 
\hline 
2017 & $90$ & $289$ & $59$ & $4$ & $89$ & $87$ & $85$ & $80$ & $76$    & $90$ & $89$ & $86$ & $81$ & $79$    & $88$ & $85$ & $84$ & $80$ & $77$\\  
\end{tabular} 
\end{table}
\subsection{Discussions and Concluding Remarks}
The results presented in this section suggest that even as we increase the number of students, projects, lecturers, and the length of the students' preference lists, each of the approximation algorithms finds stable matchings that are close to having maximum cardinality, outperforming their approximation factor. Perhaps most interesting is the $\frac{3}{2}$-approximation algorithm, which finds stable matchings that are very close in size to optimal, even on a single run. These results also holds analogously for the instances derived from real datasets.

We remark that when we removed the coalition constraints, we were able to run the IP model on an instance size of $10000$, with the solver returning a maximum matching in an average time of $100$ seconds, over $100$ randomly-generated instances. This shows that the IP model (without enforcing the coalition constraints), can be run on {\scriptsize SPA-P} instances that appear in practice, to find maximum cardinality matchings that admit no blocking pair. Coalitions should then be eliminated in polynomial time by repeatedly constructing an \emph{envy graph}, similar to the one described in \cite[p.290]{Man13}, finding a directed cycle and letting the students in the cycle swap projects.

\bibliographystyle{plain}

\newcounter{mycounter}
\setcounter{mycounter}{\value{enumiv}}

\end{document}